\newcommand{\Z}{\mathbb{Z}}
\newcommand{\N}{\mathbb{N}}
\newtheorem{theorem}{Theorem}[section]
\newtheorem{corollary}{Corollary}[theorem]
\newtheorem{lemma}[theorem]{Lemma}
\newtheorem{example}{Example}
\newtheorem{definition}{Definition}[section]
\numberwithin{equation}{section}
\begin{document}
	\title{A novel public key cryptography based on generalized Lucas matrices}
	\author{Kalika Prasad$^{1}$\footnote{E-mail: klkaprsd@gmail.com} %\includegraphics[scale=.61]{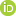} 0000-0002-3653-5854}~
	{\includegraphics[scale=.61]{ORCID-1}}, 
	Hrishikesh Mahato$^{2}$\footnote{E-mail: hrishikesh.mahato@cuj.ac.in} %\includegraphics[scale=.61]{ORCID-1} 0000-0002-3769-0653  }
	{\includegraphics[scale=.61]{ORCID-1}} 
	and Munesh Kumari$^{3}$\footnote{E-mail: muneshnasir94@gmail.com}  %\includegraphics[scale=.61]{ORCID-1} 0000-0002-6541-0284} 
	{\includegraphics[scale=.61]{ORCID-1}}
	\\\normalsize{$^{1,2,3}$Department of Mathematics,
		Central University of Jharkhand, India, 835205}}     
	\date{\today}
	\maketitle
%	{$^{2}$Corresponding author: Hrishikesh Mahato, E-mail: hrishikesh.mahato@cuj.ac.in}
	%\\Department of Mathematics, Central University of Jharkhand, India}
	\noindent\rule{16cm}{.15pt}
	\begin{abstract}
		In this article, we have proposed a generalized Lucas matrix (recursive matrix of higher order) having relation with generalized Fibonacci sequences and established many special properties in addition to that usual matrix algebra. Further, we have proposed a modified public key cryptography using these matrices as keys in Affine cipher and key agreement for encryption-decryption with the combination of terms of generalized Lucas sequences under residue operations. In this scheme, instead of exchanging the whole key matrix, only a pair of numbers(parameters) need to be exchanged, which reduces the time complexity as well as space complexity of the key transmission and has a large key-space.
	\end{abstract}
	\noindent\rule{16cm}{.1pt}
	\\\textit{\textbf{Keywords:} Affine Hill Cipher; Cryptography; Elgamal; Lucas Sequence; Lucas Matrices}
	\\\textit{\textbf{Mathematics Subject Classifications:} 11T71, 11B39, 94A60, 14G50, 68P30}
		
%%%%%%%%%%%%%%%%%%%%%%%%%%%%%%%%%%%%%%%%%%%%%%%%%%%%%%%%%%%%%%%%%%%%%%%%%%%%%%%%%%%%%%
\section{Introduction}
	Matrix theory is rich in special properties, some of such properties are based on its construction, interrelations in eigen values and structure of its invertibility. Based on its construction, matrices are used in different branches of science as well as in engineering \& technologies. One of such area is cryptography\cite{stinson2005cryptography,stallings2017cryptography,paar2009understanding}, where matrix theory plays a vital role in storing data, efficiency of encryption-decryption and enlarging the key-spaces.
	
	It is well known that recursive sequences are defined in the terms of sum or difference or product(elementary operations) on preceding terms of corresponding sequences. Nowadays a lot of research work is going on in the direction of generalizing the existing sequences for higher order as well as generalizing for arbitrary initial values. While some of the authors made generalizations by considering the same relation but with different multipliers(constant/arbitrary functions as coefficients), some of such recent developments and their applications may be seen in	 \cite{tasci2004order,kilic2010generalized,bilgici2014two,kumari2020model,kumari2021role}.
	
	Akbary \& Wang\cite{akbary2006generalized}, proposed a variant of generalized  Lucas sequence and based on it constructed all the permutation binomials $P(x)=x^r+x^u$ of the finite Field $ \mathbb{F}_q $, where $ q=p^m $ with p is some odd prime and further they investigated permutations properties of a binomial.
	Cerda-Morales\cite{cerda2013generalized} defined a new generalized Lucas $V(p,q)$-matrix similar to the generalized Fibonacci $U(1,-1)$-matrix which is different from the Fibonacci $U(p,q)$-matrix and further they established some well-known equalities and a Binet-like formula by matrix method for both generalized sequences.
	Munesh et.al.\cite{kumari2021some} proposed k-Mersenne sequence and generalized	k-Gaussian Mersenne sequence which are equivalent to Fibonacci sequences and further they constructed recursive polynomials and established many well-known identities like Binet formula, Cassini’s identity, D’Ocagne’s identity, and generating functions. 
	Halici, et.al.\cite{halici2021fibonacci}, discussed the Fibonacci quaternion matrix by considering entries as n-th Fibonacci quaternion number and derived some identities like Cassini’s identity, Binet formula, ...etc.
	In\cite{stanimirovic2008generalization} Stanimirovic, et.al defined a generalization of Fibonacci and Lucas matrices whose elements are defined by the general second-order non-degenerated sequence and in some cases, they also obtained inverse for those matrices.
	
	E. {\"O}zkan, et.al\cite{ozkan2019generalized} obtained the terms of n-step Lucas polynomials by using matrices \& generalizing the concept and then establishing the relationship between Lucas polynomials and Fibonacci polynomials.
In\cite{singh2014generalized} Singh, et.al presented a generalization of Fibonacci-Lucas sequences, constructed matrix of order three and established some identities but they followed it for second order sequences, which can be further extended to higher order sequences and matrix construction.

We know that, well-known identities Fibonacci sequence \& Lucas sequence\cite{koshy2019fibonacci} is given by recurrence relation $f_{k+2} = f_{k} + f_{k+1}, ~~k\geq 0$ with initial values $ 0,1 $ and $ 2,1 $ respectively. Similarly, Tribonacci sequence \& Lucas sequence of order three is given by recurrence relation $f_{k+3} = f_{k} + f_{k+1}+f_{k+2}, ~~k\geq 0$ with initial values $ 0,0,1 $ [A000073] and $ 3,1,3 $ [A001644] respectively. And matrix representations\cite{koshy2019fibonacci} corresponding to above recursive sequences of order two and three has been obtained as follow, where $f_{k,n}$ represents $n^{th}$ term of the sequence of order k:
	\begin{eqnarray}
		\begin{bmatrix}	
			f_{2,n+1} & f_{2,n} \\ 
			f_{2,n} & f_{2,n-1} \\
		\end{bmatrix},
	~~\begin{bmatrix}
		f_{3,n+2} & f_{3,n+1} + f_{3,n} & f_{3,n+1}\\ 
		f_{3,n+1} & f_{3,n} + f_{3,n-1} & f_{3,n} \\
		f_{3,n} & f_{3,n-1} + f_{3,n-2} & f_{3,n-1} \\
	\end{bmatrix}	
	\end{eqnarray}
	These matrices are recursive in nature consisting of many properties based on initial values of corresponding sequences. For example, let us assume that $Q_k^n$ be the matrix of order $k$ denoting multiplication $Q_k$ to n times. Now, if we consider initial value $0,1$ for order two or $0,0,1$ for order three then $(Q_k^1)^n=Q_k^n$ holds but in other cases, it does not hold, some of such observations has been found in\cite{prasad2021cryptography,tianxiao2018matrix,yilmaz2013matrix,miles1960generalized,king1960some}.	
	
	 In this paper, we are working on generalizing the Lucas sequences to higher order preserving the Fibonacci trace properties(the terms of Lucas sequence are obtained by considering the trace of the corresponding Fibonacci matrices). The construction of recursive matrices has been done with entries from combinations of terms of proposed sequences and established a relationship with generalized Fibonacci matrices(GFM)\cite{prasad2021cryptography}. Further, we implement these matrices in the Affine-Hill technique and examine its behavior and strength.
	 
	 This paper is organized as follows. Section 1 introduces the related work on the construction of recursive matrices and their applications. In section 2, preliminaries on the cryptographic scheme, signature scheme \& their mathematical formulation are studied. In section 3, we have established generalized Lucas sequences \& matrices from GFM and discussed some remarkable properties. In section 4, we proposed a new algorithm for key exchange and encryption-decryption scheme with an example. Finally, in section 5 we discussed the implementation,  complexity and strength of the proposed scheme followed by a conclusion in section 6.
	 
%%%%%%%%%%%%%%%%%%%%%%%%%%%%%%%%%%%%%%%%%%%%%%%%%%%%%%%%%%%%%%%%%%%%%%%%%%%%%%%%%%%%%%
\section{Encryption Scheme and Mathematical Flow}
%%%%%%%%%%%%%%%%%%%%%%%%%%%%%%%%%%%%%%%%%%%%%%%%%%%%%%%%%%%%%%%%%%%%%%%%%%%%%%%%%%%%%%

	\subsection*{Affine-Hill Cipher}
	Affine-Hill Cipher is one of polygraphic block cipher equivalent to Hill Cipher\cite{stallings2017cryptography,stinson2005cryptography} which encrypts blocks of size $n \geq 1$ successive plaintext to ciphertext and vice-versa. Let us assume that $P$ be the plaintext $(p_1, p_2, ..., p_r)$, $ K $ be key matrix (simply key) and $C=(c_1, c_2, ..., c_r)$ be corresponding ciphertext of sizes $1\times rn$, $r\times r$ and $1\times rn$ respectively, where $p_i~\&~c_i$ are block vectors of size $1 \times n$.  i.e 
	\begin{eqnarray}
			P =
			\begin{bmatrix}	
				p_{1}& p_{2} & ... & p_{r}
			\end{bmatrix},
			K=
			\begin{bmatrix}	
				k_{11}& k_{12} & ... & k_{1n} \\ 		
				k_{21} & k_{22} & ... & k_{2n} \\ 		
				\vdots & \vdots & \ddots & \vdots \\ 		
				k_{n1} & k_{n2} & ... & k_{nn} 	
			\end{bmatrix}
			~~\text{and}~~
			C=
			\begin{bmatrix}	
				c_{1}& c_{2} & ... & c_{r}
			\end{bmatrix} \nonumber
	\end{eqnarray}
	The Affine-Hill Cipher's technique is described as:
	\begin{eqnarray}\label{Affine-Hill}
		Enc(P):\hspace{.7cm} c_{i} &\equiv& (p_{i}K + B) \pmod p \\
		Dec(C):\hspace{.7cm} p_{i} &\equiv& (c_{i} - B)K^{-1} \pmod p		
	\end{eqnarray}
	with $(|K|,p)=1$ where $B$ is a $1\times n$ row vector, $p$ is a prime number  and $Enc(P)$ represents encryption scheme \& $Dec(C)$ represents decryption scheme.
	%%%%%%%%%%%%%%%%%%%%%%%%%%%%%%%%%%%%%%%%%%%%%%%%%%%%%%%%%%%%%%%%%%%%%%%%%%%%%%%%%
	\subsection{ElGamal  and Signature Scheme}\label{ElGamal}
	Elgamal cryptosystem\cite{stallings2017cryptography,stinson2005cryptography,paar2009understanding}, proposed by T. Elgamal\cite{elgamal1985public} in 1984 is a public-key scheme with digital signature whose strength is based on discrete logarithms (closely related to Deffie-Hellman technique).
	One of such similar digital signatures is 'Schnorr signature scheme'\cite{stallings2017cryptography} which minimizes the message based computation required to generate a signature. Usually, the digital signature scheme involves the use of a private key for the generation of the digital signature and a public key for its verification purpose.
	The design of the Elgamal technique is as encryption is done by user's public key while decryption using user's private key.
%%%%%%%%%%%%%%%%%%%%%%%%%%%%%%%%%%%%%%%%%%%%%%%%%%%%%%%%%%%%%%%%%%%%%%%%%%%%
	\subsubsection*{Primitive root}
	Primitive roots\cite{stallings2017cryptography} play a crucial role in securing strength of cryptographic schemes. Let $ n $ be any positive integer, an integer $\alpha$ is called primitive root of $ n $ if $\alpha^k\equiv 1 \pmod n$ where $k=\phi{(n)}$ is least positive integer, i.e. there does not exists any $r, 1\leq r <k$ such that $\alpha^r\equiv 1 \pmod n$. Further,  when $\alpha$ is primitive root of $n$, the powers of $\alpha$, $\alpha, \alpha^2, \alpha^3, ..., \alpha^{\phi{(n)}}$ are distinct $ \pmod n $ and also co-prime to n. Note that not all integers have primitive roots, in fact integers of the form $2, 4, p^k~or~2p^k$ where $p$ is odd prime \& $k\in\N$ have primitive roots.
	
	Similar to the Diffie-Hellman scheme, in the Elgamal technique, the global elements are the prime $p$ and a primitive root of $p$.	Elgamal scheme is defined as:
	\subsubsection{Public key setup}\label{PublikKeySetup} Let us assume that $p$ be any odd prime number, $\alpha$ be a primitive root of $p$ and an integer D is chosen such that $1 < D < \phi(p)$. Now, assign $E_{1}=\alpha$ and $E_{2}= E_{1}^{D}\pmod p$.
	Then $pk(p, E_{1}, E_{2})$ will be made as a public key and chosen $D$ is kept as a secret key. Elgamal technique can be understood as follows. Suppose two parties Alice and Bob want to communicate with each other, then they go through the following path:	
	\subsubsection{Key exchanging}\label{EGkeyexchange}
	Alice(wish to send message to Bob) first choose a random integer $e$ such that $1 < e < \phi(p)$ and then generates a signature key (say $ s $) using public key $pk(p, E_{1}, E_{2})$ where $s = E_{1}^{e} \pmod p$. Now she computes her secret key $\lambda$ for encryption as $\lambda  = E_{2}^{e} \pmod p$.
%	\\(i).~ Choose a random integer $e$ such that $1 < e < \phi(p)$, then
%	\\(ii). Compute signature $k = E_{1}^{e} \pmod p$
%	\\(iii). Compute secret key $\lambda  = E_{2}^{e} \pmod p$
%	\\(iv). 
	Thus Alice generates the parameters$(\lambda,s) $ of an encryption key using Bob's public key ($p, E_{1}, E_{2}$), then encrypt the plaintext with the encryption key and send $(\lambda, C)$ to Bob, where C is the corresponding ciphertext.
	%%%%%%%%%%%%%%%%%%%%%%%%%%%%%%%%%%%%%%%%%%%%%%%%%%%%%%%%%%%%%%%%%%%%%%%%%%%%
	\subsubsection{Key recover by Bob}
	On the other side Bob after receiving ($s, C)$ from Alice, recovers the same parameters$(\lambda,s) $ of encryption key using his secret key $ sk(D )$ as:
	\begin{eqnarray}\label{recoverkey}
		s^{D}\pmod p  &\equiv& (E_{1}^{e})^{D} \pmod p \nonumber\\
		&\equiv& (E_{1}^{D})^{e}\pmod p \equiv (E_{2})^{e}\pmod p = \lambda 
	\end{eqnarray}
	Thus, Bob and Alice agree on the same parameters$(\lambda,s) $ of encryption key i.e. the parameters$(\lambda,s)$ of encryption key exchanged securely and using these parameters, Bob can decrypt the ciphertext $C$, and recover the original plaintext $P$.
		
%%%%%%%%%%%%%%%%%%%%%%%%%%%%%%%%%%%%%%%%%%%%%%%%%%%%%%%%%%%%%%%%%%%%%%%%%%%%%%
\section{Generalized Lucas Sequences and Matrix Construction}
	In this section, we discuss the construction of generalized Lucas sequences and generalized Lucas matrices(GLM). Further, we obtain some new properties for GLM and establish a relationship with GFM.
\subsection{Generalized Fibonacci Matrices(GFM)}
	For $n\in \Z$, the generalized Fibonacci sequences$\{f_{k,n}\}$ of order $k\geq 2$ is defined as follow:
	\begin{eqnarray}\label{genFib}
		f_{k,k+n} = f_{k,k+n-1}+f_{k,k+n-2}+f_{k,k+n-3}+...+f_{k,n+2}+f_{k,n+1}+f_{k,n} ~~~~ k(\geq 2)\in\N
	\end{eqnarray} where
	\begin{eqnarray}\label{genFibInitial}
		f_{k,0} =f_{k,1}=f_{k,2}=...=f_{k,k-2}=0~~\&~~ f_{k,k-1}=1
	\end{eqnarray}
	The corresponding generalized Fibonacci matrix($Q_k^n$) of order $k$ denoted as $GFM(k,n)$ is defined\cite{prasad2021cryptography} as\\
	$Q_k^n= 
	\begin{bmatrix}
		f_{k,k+n-1} & f_{k,k+n-2}+f_{k,k+n-3}+...+f_{k,n} & f_{k,k+n-2}+...+f_{k,n+1} & ... & f_{k,k+n-2} \\
		
		f_{k,k+n-2} & f_{k,k+n-3}+f_{k,k+n-4}+...+f_{k,n-1} & f_{k,k+n-3}+...+f_{k,n} & ... & f_{k,k+n-3} \\
		
		\vdots & \vdots & \vdots & \ddots & \vdots \\
		
		f_{k,k+n-(k-1)} & f_{k,n}+f_{k,n-1}+...+f_{k,-k+n+2} & f_{k,n}+...+f_{k,-k+n+3} & ... & f_{k,n} \\
		f_{k,n} & f_{k,n-1}+f_{k,n-2}+...+f_{k,-k+n+1} & f_{k,n-1}+...+f_{k,-k+n+2} & ... & f_{k,n-1} \\
	\end{bmatrix}$
	where initial matrices are given by  
	\begin{eqnarray}
		Q_{k}= Q_{k}^1 =
	\begin{bmatrix}	
		1 & 1 & 1 & ... & 1 & 1\\	
		1 & 0 & 0 & ... & 0 & 0\\
		0 & 1 & 0 & ... & 0 & 0\\
		\vdots & \vdots & \vdots & \ddots & \vdots \\
		0 & 0 & 0 & ... & 1 & 0\\
	\end{bmatrix}~~and~~
		Q_{k}^{-1} = 
	\begin{bmatrix}	
		0 & 1 & 0 & ... & 0 & 0\\	
		0 & 0 & 1 & ... & 0 & 0\\
		0 & 0 & 0 & ... & 0 & 0\\
		\vdots & \vdots & \vdots & \ddots & \vdots & \vdots \\
		0 & 0 & 0 & ... & 0 & 1\\
		1 & -1 & -1 & ... & -1 & -1\\
	\end{bmatrix}_{k\times k} \nonumber
	\end{eqnarray}
	\begin{theorem}\label{MultinacciProp}\cite{prasad2021cryptography}
		Let $k(\geq 2)\in\N,n\in\Z$, $Q_k^{n}$ be generalized Fibonacci matrices ($ GFM(k,n) $) and $I_{k}$ is identity matrix of order k then following properties holds for every $ n $ and $ k $:
		\begin{enumerate}
			\item ${(Q_k^{1})^n}=Q_k^{n}$ and ${(Q_k^{-1})^n}=Q_k^{-n}$.
			\item ${Q_k^{n}Q_k^{l}}=Q_k^{n+l}={Q_k^{l}Q_k^{n}}$.
			\item $Q_{k}^{n}Q_{k}^{-n} = Q_{k}^{0} = I_{k}$.
			\item $det(Q_{k}^{n}) = [(-1)^{k-1}]^n = (-1)^{(k-1)n}$.
		\end{enumerate}
	\end{theorem}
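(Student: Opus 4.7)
The plan is to establish property (1) first as the workhorse, then derive (2) and (3) as formal consequences, and finally handle the determinant (4) via multiplicativity.

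First I would verify that $Q_{k}^{0}=I_{k}$ by reading the displayed formula for $Q_{k}^{n}$ at $n=0$: every off-diagonal entry is a sum of $f_{k,j}$'s with $j\le k-2$, all of which vanish by the initial conditions (3.2), while the diagonal entries collapse to $f_{k,k-1}=1$. With this base in hand I would prove $(Q_{k}^{1})^{n}=Q_{k}^{n}$ for $n\ge 1$ by induction on $n$, the inductive step being the identity $Q_{k}\cdot Q_{k}^{n}=Q_{k}^{n+1}$. Since $Q_{k}$ has its top row equal to $(1,1,\dots,1)$ and its lower rows forming a shifted identity, left-multiplication by $Q_{k}$ has two effects: the new first row is the column sum of $Q_{k}^{n}$, while rows $2,\dots,k$ become rows $1,\dots,k-1$ of $Q_{k}^{n}$. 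The latter $k-1$ rows already match the bottom $k-1$ rows of $Q_{k}^{n+1}$ by inspection, since the displayed formula is stable under the simultaneous shifts $n\mapsto n+1$ and row-index $i\mapsto i-1$. The new first row matches after applying the recurrence (3.1); for the $(1,1)$ position this is the single identity $\sum_{i=0}^{k-1}f_{k,k+n-1-i}=f_{k,k+n}$, and the mixed partial-sum entries in the remaining columns yield to analogous telescoping. For negative $n$ I would first confirm $Q_{k}\,Q_{k}^{-1}=I_{k}$ directly from the explicit form of $Q_{k}^{-1}$ (checking the top row again via the recurrence), and then run the same downward induction with $Q_{k}^{-1}$ in place of $Q_{k}$.

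Once (1) is in hand, property (2) is immediate because powers of a single matrix commute: $Q_{k}^{n}Q_{k}^{l}=(Q_{k}^{1})^{n}(Q_{k}^{1})^{l}=(Q_{k}^{1})^{n+l}=Q_{k}^{n+l}$, and reversing the order gives the same value. Property (3) is then the special case $l=-n$ of (2), combined with the identification $Q_{k}^{0}=I_{k}$ obtained in the base step of (1). For (4), a cofactor expansion of $Q_{k}$ along its last column (whose only nonzero entry is the $1$ at position $(1,k)$) reduces $\det(Q_{k})$ to $(-1)^{1+k}$ times the determinant of the minor obtained by deleting row $1$ and column $k$; that minor is precisely the $(k-1)\times(k-1)$ identity matrix, so $\det(Q_{k})=(-1)^{k-1}$. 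Multiplicativity of the determinant, combined with (1), yields $\det(Q_{k}^{n})=[\det(Q_{k})]^{n}=(-1)^{(k-1)n}$.

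The principal obstacle is the entry-by-entry verification inside the inductive step of (1). The intermediate columns of $Q_{k}^{n}$ are partial sums $f_{k,a}+f_{k,a-1}+\cdots+f_{k,b}$ whose index windows $[b,a]$ depend on both the column position and on $n$, and showing that the column sums at level $n$ produce exactly the partial sums occupying the corresponding columns at level $n+1$ requires tracking these windows carefully and collapsing them with (3.1). Once a consistent indexing convention is fixed, the algebra is routine, but laying it out cleanly — so that each column's telescoping is manifest and the top row of $Q_{k}^{n+1}$ is reproduced in a single pass — is the delicate part of the argument.
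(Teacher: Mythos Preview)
Your argument is sound, but there is nothing in the present paper to compare it against: Theorem~\ref{MultinacciProp} is quoted from \cite{prasad2021cryptography} and no proof is given here. The key mechanism you isolate for the inductive step of (1) --- that left-multiplication by $Q_{k}$ replaces the top row by the column sums and shifts the remaining rows down --- is exactly what the paper records separately as Lemma~\ref{Q1Multiplication}, so your approach is fully aligned with the surrounding machinery. Your derivations of (2) and (3) from (1) are clean, and the cofactor computation for (4) is correct as stated: the minor obtained by deleting row $1$ and column $k$ of $Q_{k}$ is indeed $I_{k-1}$, giving $\det(Q_{k})=(-1)^{k-1}$.
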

	\begin{lemma}\label{Q1Multiplication}
		Let $ Q_k^1$ be first Fibonacci matrix of order k and $ A=(a_{ij}) $ be any square matrix of same size, then on multiplication with $ Q_k^1$ to A,
		the first row of $ Q_k^1A$ will be the sum of corresponding columns of A and second row to $ k^{th} $-row of $ Q_k^1A$ becomes first to $ (k-1)^{kth} $ rows of A respectively.
	\end{lemma}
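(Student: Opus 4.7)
The plan is to prove the lemma by a direct computation based on the explicit sparse structure of $Q_k^1$. Looking at the displayed form of $Q_k^1$, we see that its first row consists entirely of $1$'s, and for $i\geq 2$ the $i$-th row has a single nonzero entry (equal to $1$) in column $i-1$, i.e.\ $(Q_k^1)_{1,j}=1$ for all $j$, and for $i\geq 2$, $(Q_k^1)_{i,j}=\delta_{j,i-1}$. Once this is made explicit, the result falls out of the definition of matrix multiplication.

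First I would write out the $(i,j)$-entry of the product $Q_k^1 A$ as $(Q_k^1 A)_{ij}=\sum_{l=1}^{k}(Q_k^1)_{il}\,a_{lj}$. Then I would split into two cases according to whether $i=1$ or $i\geq 2$. For $i=1$, every $(Q_k^1)_{1,l}=1$, so the sum collapses to $\sum_{l=1}^{k} a_{lj}$, which is exactly the sum of the entries of the $j$-th column of $A$; ranging $j$ over $1,\dots,k$ shows that the first row of $Q_k^1 A$ is the vector of column sums of $A$, matching the first assertion.

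Next, for $i\geq 2$, the only nonzero term in the inner sum occurs at $l=i-1$, giving $(Q_k^1 A)_{ij}=a_{i-1,\,j}$. Letting $j$ vary, the $i$-th row of $Q_k^1 A$ coincides with the $(i-1)$-th row of $A$, which establishes the second assertion for $i=2,\dots,k$.

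There is essentially no hard step here; the only thing to be careful about is bookkeeping the index shift between the block of identity-like rows of $Q_k^1$ and the rows of $A$ they copy, and stating cleanly that $i$ ranges from $2$ to $k$ while $i-1$ ranges from $1$ to $k-1$. I would conclude by collecting the two cases into a single statement describing the full row structure of $Q_k^1 A$, which is precisely the claim of the lemma.
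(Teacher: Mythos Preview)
Your proof is correct and follows exactly the approach the paper indicates: the paper's own proof simply says ``It can be easily proved by the usual matrix multiplication of $Q_k^1$ and $A$,'' and your argument carries out precisely that computation using the explicit row structure of $Q_k^1$.
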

	\begin{proof}
		It can be easily proved by the usual matrix multiplication of $ Q_k^1$ and $A$.
	\end{proof}
%%%%%%%%%%%%%%%%%%%%%%%%%%%%%%%%%%%%%%%%%%%%%%%%%%%%%%%%%%%%%%%%%%%%%%%%%%%%%%%
	\subsection{Generalized Lucas Matrices(GLM)}
	Now, we are establishing a new sequence$\{l_{k,n}\} $ analogous to generalized Fibonacci sequence as follow:
	\begin{definition}
	Consider the recurrence relation $\{l_{k,n}\} $ of order $k\geq 2$ defined as 
	\begin{eqnarray}\label{genlucas}
		l_{k,k+n} = l_{k,k+n-1}+l_{k,k+n-2}+l_{k,k+n-3}+...+l_{k,n+2}+l_{k,n+1}+l_{k,n} ~~~~n\geq0, k(\geq 2)\in\N\\
		\text{with ${l_{k,r}} = trace(Q_k^r)$, $0\leq r<k$.}\nonumber
	\end{eqnarray}		
	\end{definition}
	 To obtain initial values for above proposed sequence, we are considering trace of first k generalized Fibonacci matrices. Using equation(\ref{genlucas}) and by the definition of $Q_K^n$ it can be observed that
	\begin{eqnarray}\label{tracerelation}
		l_{k,n} &=& trace(Q_k^n)~~~~~~~ for~n\in \Z,~k(\geq 2)\in \N \\
	i.e.~~~~l_{k,n}&=& f_{k,k+n-1}+1f_{k,k+n-3}+2f_{k,k+n-4}+3f_{k,k+n-5}+... +(k-3)f_{k,n+1} \nonumber\\
		& &+ (k-2)f_{k,n}+(k-1)f_{k,n-1}\nonumber\\
		&=& f_{k,k+n-1}+ \sum_{i=n-1}^{k+n-3}f_{k,i}+ \sum_{i=n-1}^{k+n-4}f_{k,i}+... + \sum_{i=n-1}^{n-2}f_{k,i}+ \sum_{i=n-1}^{n-1}f_{k,i}
	\end{eqnarray}
	which yields $k,1,3,7,15,31,63,127,255,511,..., 2^{k-1}-1$ as initial values of $\{l_{k,n}\} $.
	
	Since the generalized Fibonacci sequences $\{f_{k,n}\}$ and GFM(k,n) $ \{Q_k^n\} $ are both two ended sequences, hence $\{l_{k,n}\} $ can also be extended in negative direction.
	The new sequence $\{l_{k,n} \}$ is called the generalized Lucas sequence of order k. In particular, $ k=2 $ gives the standard Lucas sequence of order 2.

	Here, the recursive matrix($L_k^{(n)}$) corresponding to the above sequence is defined as
	\begin{eqnarray}\label{GenLucasMatrix}
		L_k^{(n)}= 
		\begin{bmatrix}
			l_{k,k+n-1} & l_{k,k+n-2}+l_{k,k+n-3}+...+l_{k,n} & l_{k,k+n-2}+...+l_{k,n+1} & ... & l_{k,k+n-2} \\
			
			l_{k,k+n-2} & l_{k,k+n-3}+l_{k,k+n-4}+...+l_{k,n-1} & l_{k,k+n-3}+...+l_{k,n} & ... & l_{k,k+n-3} \\
			
			\vdots & \vdots & \vdots & \ddots & \vdots \\
			
			l_{k,k+n-(k-1)} & l_{k,n}+l_{k,n-1}+...+l_{k,-k+n+2} & l_{k,n}+...+l_{k,-k+n+3} & ... & l_{k,n} \\
			l_{k,k+n-k} & l_{k,n-1}+l_{k,n-2}+...+l_{k,-k+n+1} & l_{k,n-1}+...+l_{k,-k+n+2} & ... & l_{k,n-1} \\
		\end{bmatrix}.
	\end{eqnarray}
	And the matrix($L_k^{(n)}$) is known as generalized Lucas Matrix of order $ k $ and we denote it by GLM(k,n).
	Further, the initial Lucas matrix $L_k^{(0)}$ is
	\begin{eqnarray}\label{GenInitalLucasMatrix}
		L_k^{(0)} &=&
		\begin{bmatrix}
			2^{k-1}-1 & 2^{k-1} & 2^{k-1}-k & ... & 7.2^{k-4}& 3.2^{k-3} & 2^{k-2}-1\\
			
			2^{k-2}-1 & 2^{k-2} & 2^{k-2}+1 & ... & 7.2^{k-5}& 3.2^{k-4} & 2^{k-3}-1 \\
			2^{k-3}-1 & 2^{k-3} & 2^{k-3}+1& ... & 7.2^{k-6}& 3.2^{k-5} & 2^{k-4}-1 \\
			\vdots & \vdots & \vdots & \ddots & \vdots \\
			%			3 & l_{0}+l_{-1}+...+l_{-k+2} & l_{0}+...+l_{-k+3} & ... & l_{k-k} \\
			1 & 2 & 3 & ...& k-2 & k-1 & k \\
			k & 1-k & 2-k & ...& -3 & -2 & -1 \\
		\end{bmatrix}
	\end{eqnarray}	
%%%%%%%%%%%%%%%%%%%%%%%%%%%%%%%%%%%%%%%%%%%%%%%%%%%%%%%%%%%%%%%%%%%%%%%%%%%%%%%%%%
	\begin{example}
	Initial Lucas matrix of order two, three, four and five are respectively
	\begin{eqnarray}
		%L_{2}^{(0)} = 
		\begin{bmatrix}
			1 & 2\\
			2 & -1\\
		\end{bmatrix},
	%	L_{3}^{(0)} = 
		\begin{bmatrix}
			3 & 4 & 1\\	
			1 & 2 & 3\\
			3 & -2 & -1\\
		\end{bmatrix},
	%	L_{4}^{(0)} = 
		\begin{bmatrix}
			7 & 8 & 4 & 3\\	
			3 & 4 & 5 & 1\\	
			1 & 2 & 3 & 4\\				
			4 & -3 & -2 & -1\\	
		\end{bmatrix}
	~and~
%	L_{5}^{(0)} = 
	\begin{bmatrix}
		15 & 16 & 11 & 10 & 7\\
		7 & 8 & 9 & 4 & 3\\	
		3 & 4 & 5 & 6&  1\\	
		1 & 2 & 3 & 4 & 5\\				
		5 & -4 & -3 & -2 & -1\\	
	\end{bmatrix}\nonumber	
	\end{eqnarray}
	\end{example}
	\begin{theorem}\label{Fibo=Lucas}
		Let  $n\in\Z$, $Q_k^{n}$ be generalized Fibonacci matrix, $L_{k}^{(n)}$ be  $ GLM(k,n) $. Suppose $L_{k}^{(0)}$ be the initial matrix as defined by equation(\ref{GenInitalLucasMatrix}), then we have
		\begin{eqnarray}
			{L_k^{(n)}} = Q_{k}^{n}L_{k}^{(0)}=L_{k}^{(0)} Q_{k}^{n}  %\nonumber
		\end{eqnarray}
	\end{theorem}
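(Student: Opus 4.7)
The plan is to reduce the theorem to the one-step identities $L_k^{(n+1)} = Q_k L_k^{(n)} = L_k^{(n)} Q_k$ and conclude by induction on $n$. The base case $n=0$ holds because $Q_k^0 = I_k$ by Theorem~\ref{MultinacciProp}, and the inductive step follows from either one-step identity combined with $Q_k^{n+1} = Q_k \cdot Q_k^n = Q_k^n \cdot Q_k$ from the same theorem.

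For the left identity $L_k^{(n+1)} = Q_k L_k^{(n)}$, I would invoke Lemma~\ref{Q1Multiplication}: rows $2$ through $k$ of $Q_k L_k^{(n)}$ are rows $1$ through $k-1$ of $L_k^{(n)}$, and these coincide with rows $2$ through $k$ of $L_k^{(n+1)}$ by direct inspection of (\ref{GenLucasMatrix})---every $(i,j)$ entry with $i\ge 2$ in $L_k^{(n+1)}$ is the $(i-1,j)$ entry of $L_k^{(n)}$ under $n \mapsto n+1$. For the first row, I need to check that the $j$-th column sum of $L_k^{(n)}$ equals the $(1,j)$ entry of $L_k^{(n+1)}$. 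When $j=1$ this is exactly the Lucas recurrence $\sum_{i=1}^{k} l_{k,k+n-i} = l_{k,k+n}$ from (\ref{genlucas}). When $j \geq 2$, the column sum is a double sum $\sum_{i=1}^{k} \sum_{m=n+j-i-1}^{k+n-i-1} l_{k,m}$ which, after interchanging the order of summation, becomes a sum whose inner slices each apply the Lucas recurrence once, collapsing to $\sum_{m=n+j-1}^{k+n-1} l_{k,m}$---precisely the $(1,j)$ entry of $L_k^{(n+1)}$.

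The right identity $L_k^{(n+1)} = L_k^{(n)} Q_k$ proceeds analogously by column analysis: right multiplication by $Q_k$ sends the columns of $L_k^{(n)} Q_k$ to (column $1$ + column $2$) of $L_k^{(n)}$ when $j=1$, to (column $1$ + column $j+1$) when $2 \leq j \leq k-1$, and to column $1$ when $j=k$. Matching against the columns of $L_k^{(n+1)}$ reduces in each case to a single application of the Lucas recurrence to merge two consecutive blocks of Lucas terms into one longer block whose length and starting index agree with (\ref{GenLucasMatrix}).

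Induction then closes the argument for $n \geq 0$; the extension to $n<0$ is automatic by pre-multiplying $L_k^{(n+1)} = Q_k L_k^{(n)}$ by $Q_k^{-1}$ from Theorem~\ref{MultinacciProp}, since both $\{l_{k,n}\}$ and $\{L_k^{(n)}\}$ are defined for all integer indices. The main obstacle I anticipate is the bookkeeping in the first-row column-sum computation: the index ranges nest two deep and must be reindexed carefully so that the Lucas recurrence collapses the intended $k$ consecutive terms rather than overlapping or missing them; once the reindexing is pinned down, each verification is a one-line application of (\ref{genlucas}).
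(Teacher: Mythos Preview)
Your proposal is correct and follows essentially the same approach as the paper: induction on $n$ with the one-step identity $Q_k L_k^{(n)} = L_k^{(n+1)}$ established via Lemma~\ref{Q1Multiplication}, the negative-$n$ case handled by reversing the recursion, and the commuted identity $L_k^{(0)} Q_k^n$ treated by a symmetric argument. Your write-up is in fact more explicit than the paper's---you spell out why the first-row column sum collapses via the Lucas recurrence and give a concrete column analysis for the right-multiplication case, whereas the paper simply invokes Lemma~\ref{Q1Multiplication} and displays the resulting matrix.
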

	\begin{proof}
		We prove it using mathematical induction on n, for $n\geq 1$. Since for $ n=0 $, result holds obviously. For $ n=1 $, we have
		\begin{eqnarray}
		Q_{k}^{1}L_{k}^{(0)} &=&
		\begin{bmatrix}	
			1 & 1 & 1 & ... & 1 & 1\\	
			1 & 0 & 0 & ... & 0 & 0\\
			0 & 1 & 0 & ... & 0 & 0\\
			\vdots & \vdots & \vdots & \ddots & \vdots \\
			0 & 0 & 0 & ... & 0 & 0\\
			0 & 0 & 0 & ... & 1 & 0\\
		\end{bmatrix}
		\begin{bmatrix}
			2^{k-1}-1 & 2^{k-1} & 2^{k-1}-k & ... & 7.2^{k-4}& 3.2^{k-3} & 2^{k-2}-1\\
			
			2^{k-2}-1 & 2^{k-2} & 2^{k-2}+1 & ... & 7.2^{k-5}& 3.2^{k-4} & 2^{k-3}-1 \\
			2^{k-3}-1 & 2^{k-3} & 2^{k-3}+1& ... & 7.2^{k-6}& 3.2^{k-5} & 2^{k-4}-1 \\
			\vdots & \vdots & \vdots & \ddots & \vdots \\
			%			3 & l_{0}+l_{-1}+...+l_{-k+2} & l_{0}+...+l_{-k+3} & ... & l_{k-k} \\
			1 & 2 & 3 & ...& k-2 & k-1 & k \\
			k & 1-k & 2-k & ...& -3 & -2 & -1 \\
		\end{bmatrix} \nonumber\\
	& & \text{using lemma(\ref{Q1Multiplication})}\nonumber\\
		&=&
		\begin{bmatrix}
			2^{k}-1 & 2^{k}-k-1 & 2^{k}-2-k& ... & 7.2^{k-3}& 3.2^{k-2} & 2^{k-1}-1 \\
			2^{k-1}-1 & 2^{k-1} & 2^{k-1}-k & ... & 7.2^{k-4}& 3.2^{k-3} & 2^{k-2}-1\\
			
			2^{k-2}-1 & 2^{k-2} & 2^{k-2}+1 & ... & 7.2^{k-5}& 3.2^{k-4} & 2^{k-3}-1 \\
			\vdots & \vdots & \vdots & \ddots & \vdots \\
			%			3 & l_{0}+l_{-1}+...+l_{-k+2} & l_{0}+...+l_{-k+3} & ... & l_{k-k} \\
			1 & 2 & 3 & ...& k-2 & k-1 & k \\
		%	k & 1-k & 2-k & ...& -3 & -2 & -1 \\
		\end{bmatrix} = L_{k}^{(1)} \nonumber
		\end{eqnarray}
	Now, assuming the result is true for $n=r$ i.e. $Q_{k}^{r}L_{k}^{(0)} = L_k^{(r)}$. Further we show that the statement is also true for $n=r+1$. We have, 
	\begin{eqnarray}
		Q_{k}^{r+1}L_{k}^{(0)} &=& Q_{k}^{1}Q_{k}^{r}L_{k}^{(0)} = Q_{k}^{1}L_{k}^{(r)}\nonumber\\
		&=& Q_{k}^{1} 
		\begin{bmatrix}
			l_{k,k+r-1} & l_{k,k+r-2}+l_{k,k+r-3}+...+l_{k,r} & l_{k,k+r-2}+...+l_{k,r+1} & ... & l_{k,k+r-2} \\
			
			l_{k,k+r-2} & l_{k,k+r-3}+l_{k,k+r-4}+...+l_{k,r-1} & l_{k,k+r-3}+...+l_{k,r} & ... & l_{k,k+r-3} \\
			
			\vdots & \vdots & \vdots & \ddots & \vdots \\
			
			l_{k,k+r-(k-1)} & l_{k,r}+l_{k,r-1}+...+l_{k,-k+r+2} & l_{k,r}+...+l_{k,-k+r+3} & ... & l_{k,r} \\
			l_{k,k+r-k} & l_{k,r-1}+l_{k,r-2}+...+l_{k,-k+r+1} & l_{k,r-1}+...+l_{k,-k+r+2} & ... & l_{k,r-1} \\
		\end{bmatrix} \nonumber\\
	& &\text{using lemma(\ref{Q1Multiplication})}\nonumber\\
	%{\textbf{\textcolor{red}{using corollary(3.2.1) of 2nd paper.}}}\nonumber\\
	&=& ~~~~~
	\begin{bmatrix}		
		l_{k,k+r} & l_{k,k+r-1}+l_{k,k+r-2}+...+l_{k,r+1} & l_{k,k+r-1}+...+l_{k,r+2} & ... & l_{k,k+r-1} \\
		l_{k,k+r-1} & l_{k,k+r-2}+l_{k,k+r-3}+...+l_{k,r} & l_{k,k+r-2}+...+l_{k,r+1} & ... & l_{k,k+r-2} \\
		
		l_{k,k+r-2} & l_{k,k+r-3}+l_{k,k+r-4}+...+l_{k,r-1} & l_{k,k+r-3}+...+l_{k,r} & ... & l_{k,k+r-3} \\
		
		\vdots & \vdots & \vdots & \ddots & \vdots \\
		
		l_{k,k+r-(k-1)} & l_{k,r}+l_{k,r-1}+...+l_{k,-k+r+2} & l_{k,r}+...+l_{k,-k+r+3} & ... & l_{k,r} \\
	\end{bmatrix} \nonumber\\
	&=& ~~~~~L_{k}^{(r+1)}.\nonumber
	\end{eqnarray}
	For negative direction, consider $n=-r$, where $r\geq 0$. Then by mathematical induction on $r$, it can be proved for negative direction with a similar argument. Combining both the cases, the result holds for all $n\in \mathbb{Z}$. The second equality can be proved similarly as first.
	\end{proof}
%	\begin{corollary}
%		For $ k\in \N,~n\in\Z $, we have~~ ${L_k^{(-n)}} = Q_{k}^{-n}L_{k}^{(0)}$.
%	\end{corollary}
%	\begin{lemma}\label{lemmacommute}
%		For $ k\in \N,~n\in\Z $, we have~~
%		$Q_{k}^{n}L_{k}^{(0)} = L_{k}^{(0)}Q_{k}^{n}$.
%	\end{lemma}
%	\begin{proof}
%		It can be easily proved using a similar argument to theorem(\ref{Fibo=Lucas}).
%	\end{proof}
	\begin{theorem}\label{LucasGen}
		Let $~n\in\N$, then we have~~
		$(L_{k}^{(1)})^n = (L_{k}^{(n)})(L_{k}^{(0)})^{n-1}$.
	\end{theorem}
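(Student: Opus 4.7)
The plan is to leverage Theorem \ref{Fibo=Lucas}, which asserts the double identity $L_k^{(n)} = Q_k^{n} L_k^{(0)} = L_k^{(0)} Q_k^{n}$. The crucial consequence, obtained by comparing the two factorizations, is that $Q_k^{n}$ and $L_k^{(0)}$ commute for every $n \in \mathbb{Z}$; in particular $Q_k$ commutes with $L_k^{(0)}$, and hence so does every power of $Q_k$ and every polynomial in these two matrices.

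From here the proof is essentially a one-line manipulation. First I would write $L_k^{(1)} = Q_k L_k^{(0)}$ (the $n=1$ case of Theorem \ref{Fibo=Lucas}). Then, using the commutativity noted above to slide every occurrence of $L_k^{(0)}$ past the $Q_k$'s, I would compute
\begin{equation*}
(L_k^{(1)})^n \;=\; (Q_k L_k^{(0)})^n \;=\; Q_k^{n}\,(L_k^{(0)})^{n} \;=\; \bigl(Q_k^{n} L_k^{(0)}\bigr)\,(L_k^{(0)})^{n-1} \;=\; L_k^{(n)}\,(L_k^{(0)})^{n-1},
\end{equation*}
where the last equality is again Theorem \ref{Fibo=Lucas}.

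If one prefers to avoid invoking commutativity directly, an equivalent route is induction on $n$: the base case $n=1$ is trivial, and for the inductive step one writes $(L_k^{(1)})^{n+1} = L_k^{(1)} (L_k^{(1)})^{n} = Q_k L_k^{(0)} \cdot L_k^{(n)} (L_k^{(0)})^{n-1}$, then uses $L_k^{(0)} L_k^{(n)} = L_k^{(0)} Q_k^{n} L_k^{(0)} = Q_k^{n} (L_k^{(0)})^{2}$ (by Theorem \ref{Fibo=Lucas}) to reassemble the right-hand side as $Q_k^{n+1} L_k^{(0)} \cdot (L_k^{(0)})^{n} = L_k^{(n+1)} (L_k^{(0)})^{n}$.

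The only real obstacle is recognizing that Theorem \ref{Fibo=Lucas} already encodes commutativity of $Q_k^{n}$ with $L_k^{(0)}$; once that is extracted, no calculation with the explicit entries of $L_k^{(n)}$ is required, and the identity drops out immediately.
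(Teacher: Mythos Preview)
Your proposal is correct and follows essentially the same route as the paper: write $L_k^{(1)} = Q_k L_k^{(0)}$ via Theorem~\ref{Fibo=Lucas}, use the commutativity of $Q_k$ with $L_k^{(0)}$ (implicit in that theorem) to collapse $(Q_k L_k^{(0)})^n$ to $Q_k^n (L_k^{(0)})^n$, and then regroup using Theorem~\ref{Fibo=Lucas} again. The only difference is cosmetic---the paper presents this as a single chain of equalities without isolating the commutativity observation, whereas you make that step explicit.
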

	\begin{proof}
%		It can be achieved using theorem(\ref{Fibo=Lucas}) and lemma(\ref{lemmacommute}).
	\begin{eqnarray}
		(L_{k}^{(1)})^n = (Q_{k}^{1}L_{k}^{(0)})^n &=& Q_{k}^{1}L_{k}^{(0)} Q_{k}^{1}L_{k}^{(0)}...n-times...Q_{k}^{1}L_{k}^{(0)}
		= (Q_{k}^{1})^n(L_{k}^{(0)})^n \nonumber\\
		&=& (Q_{k}^{n})(L_{k}^{(0)})^n = (Q_{k}^{n}L_{k}^{(0)})(L_{k}^{(0)})^{n-1} = (L_{k}^{(n)})(L_{k}^{(0)})^{n-1}
	\end{eqnarray}
	\end{proof}
	\begin{theorem}\label{GLMdet}
		Determinant of generalized Lucas matrices are given by
		\begin{eqnarray}
			det(L_{k}^{(n)}) =  \left\{ 
			\begin{array}{lr} 
				det(L_{k}^{(0)}) & : if~k=odd~~\\
				(-1)^n det(L_{k}^{(0)}) & :if~k=even
			\end{array}
			\right.
		\end{eqnarray}
	\end{theorem}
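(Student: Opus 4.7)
The plan is to reduce the determinant computation for $L_k^{(n)}$ to the already established factorization from Theorem \ref{Fibo=Lucas} together with the determinant formula for $Q_k^n$ from Theorem \ref{MultinacciProp}. Since these two tools do essentially all of the work, the argument will be a short two-line calculation split into cases on the parity of $k$.

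First I would apply Theorem \ref{Fibo=Lucas} to write $L_k^{(n)} = Q_k^n L_k^{(0)}$. Taking determinants and using multiplicativity gives
\begin{eqnarray}
\det(L_k^{(n)}) = \det(Q_k^n)\,\det(L_k^{(0)}).\nonumber
\end{eqnarray}
Next I would substitute $\det(Q_k^n) = (-1)^{(k-1)n}$ from item 4 of Theorem \ref{MultinacciProp}, so that $\det(L_k^{(n)}) = (-1)^{(k-1)n}\det(L_k^{(0)})$.

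Finally I would split into two cases. If $k$ is odd then $k-1$ is even, hence $(-1)^{(k-1)n} = 1$ and $\det(L_k^{(n)}) = \det(L_k^{(0)})$. If $k$ is even then $k-1$ is odd, hence $(-1)^{(k-1)n} = (-1)^n$ and $\det(L_k^{(n)}) = (-1)^n\det(L_k^{(0)})$, matching the two branches of the claimed piecewise formula.

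There is no real obstacle here: the content is entirely packaged in the prior results. The only thing to be careful about is that Theorem \ref{Fibo=Lucas} is stated for all $n\in\mathbb{Z}$ (not just $n\geq 0$) and likewise $\det(Q_k^n) = (-1)^{(k-1)n}$ holds for all integers $n$, so the conclusion automatically covers negative indices without a separate induction. No commutativity of $Q_k^n$ and $L_k^{(0)}$ is needed, though it is available from Theorem \ref{Fibo=Lucas} should one prefer the factorization $L_k^{(0)} Q_k^n$ instead.
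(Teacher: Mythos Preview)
Your proposal is correct and matches the paper's own proof essentially line for line: factor $L_k^{(n)} = Q_k^n L_k^{(0)}$ via Theorem~\ref{Fibo=Lucas}, apply multiplicativity of the determinant together with $\det(Q_k^n)=(-1)^{(k-1)n}$ from Theorem~\ref{MultinacciProp}, and split on the parity of $k$. Your additional remarks on validity for all $n\in\mathbb{Z}$ and on not needing commutativity are accurate and go slightly beyond what the paper writes, but the core argument is identical.
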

	\begin{proof}
		We have 
		\begin{eqnarray}
			det(L_{k}^{(n)}) = det(Q_{k}^{(n)}L_{k}^{(0)}) 
		&=& det(Q_{k}^{n})det(L_{k}^{(0)}) = (-1)^{(k-1)n}det(L_{k}^{(0)})\nonumber\\
		 &=& \left\{ 
		\begin{array}{lr} 
			det(L_{k}^{(0)}) & : if~k=odd~~\\
			(-1)^n det(L_{k}^{(0)}) & :if~k=even
		\end{array}
		\right.
		\end{eqnarray}
	\end{proof}

	\begin{theorem}
	For $ m,n\in\Z,~k(\geq2)\in\N $, we have
	\begin{eqnarray}
		{L_k^{(m)}L_k^{(n)}}= L_k^{(m+n)} L_{k}^{(0)} = {L_k^{(n)}L_k^{(m)}}
	\end{eqnarray}
	\end{theorem}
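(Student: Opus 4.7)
The plan is to reduce everything to products of $Q_k$-powers and $L_k^{(0)}$, exploiting the two-sided factorization from Theorem \ref{Fibo=Lucas}. That theorem gives $L_k^{(n)} = Q_k^{n} L_k^{(0)} = L_k^{(0)} Q_k^{n}$ for every $n\in\Z$, which in particular says that $L_k^{(0)}$ commutes with each $Q_k^{n}$. This commutation relation is the only non-obvious ingredient, and it is already in hand.

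First I would rewrite the left-hand side as
\begin{equation*}
L_k^{(m)} L_k^{(n)} \;=\; \bigl(Q_k^{m} L_k^{(0)}\bigr)\bigl(Q_k^{n} L_k^{(0)}\bigr).
\end{equation*}
Then I would slide $L_k^{(0)}$ past $Q_k^{n}$ using the commutation above, and apply the semigroup property $Q_k^{m}Q_k^{n}=Q_k^{m+n}$ from Theorem \ref{MultinacciProp}(2), obtaining
\begin{equation*}
L_k^{(m)} L_k^{(n)} \;=\; Q_k^{m}Q_k^{n}\bigl(L_k^{(0)}\bigr)^{2} \;=\; Q_k^{m+n}\bigl(L_k^{(0)}\bigr)^{2}.
\end{equation*}

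To finish the first equality, I would regroup as $\bigl(Q_k^{m+n} L_k^{(0)}\bigr) L_k^{(0)}$ and recognize the parenthesized factor as $L_k^{(m+n)}$ by Theorem \ref{Fibo=Lucas}, giving $L_k^{(m+n)} L_k^{(0)}$. The commutativity $L_k^{(m)} L_k^{(n)}=L_k^{(n)} L_k^{(m)}$ is then immediate, because the intermediate expression $Q_k^{m+n}(L_k^{(0)})^{2}$ is symmetric in $m$ and $n$ (using $Q_k^{m+n}=Q_k^{n+m}$); alternatively, one can run the same chain of manipulations with the roles of $m$ and $n$ swapped.

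There is no real obstacle: the entire argument is a four-line manipulation once the two-sided factorization of Theorem \ref{Fibo=Lucas} is invoked. The only conceptual point worth flagging is that the step $L_k^{(0)} Q_k^{n} = Q_k^{n} L_k^{(0)}$ is a genuine commutativity statement (the matrices $L_k^{(0)}$ and $Q_k$ do not commute with arbitrary matrices), and this is precisely what Theorem \ref{Fibo=Lucas} supplies.
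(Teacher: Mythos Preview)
Your proof is correct and follows essentially the same route as the paper: factor each $L_k^{(\cdot)}$ as $Q_k^{\cdot}L_k^{(0)}$ via Theorem~\ref{Fibo=Lucas}, use the commutation $L_k^{(0)}Q_k^{n}=Q_k^{n}L_k^{(0)}$ (also from Theorem~\ref{Fibo=Lucas}) together with $Q_k^{m}Q_k^{n}=Q_k^{m+n}$ from Theorem~\ref{MultinacciProp}, and regroup. The paper establishes the second equality by rerunning the same chain with $m,n$ swapped, which is one of the two options you mention.
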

	\begin{proof}
		From theorem(\ref{Fibo=Lucas}) and theorem(\ref{MultinacciProp}) we have,
		\begin{eqnarray}
			L_k^{(m)}L_k^{(n)} = Q_{k}^{(m)}L_{k}^{(0)}Q_{k}^{(n)}L_{k}^{(0)}
			= Q_{k}^{(m)}Q_{k}^{(n)}L_{k}^{(0)}L_{k}^{(0)} = Q_{k}^{(m+n)}L_{k}^{(0)}L_{k}^{(0)}
			= L_k^{(m+n)} L_{k}^{(0)}\nonumber\\			
			L_k^{(n)}L_k^{(m)} = Q_{k}^{(n)}L_{k}^{(0)}Q_{k}^{(m)}L_{k}^{(0)}
			= Q_{k}^{(n)}Q_{k}^{(m)}L_{k}^{(0)}L_{k}^{(0)} = Q_{k}^{(m+n)}L_{k}^{(0)}L_{k}^{(0)}
			= L_k^{(m+n)} L_{k}^{(0)}\nonumber
		\end{eqnarray}
	\end{proof}
	
	\begin{corollary}
		For $ k(\geq2)\in \N,~n\in\Z $, we have~~
		$L_{k}^{(n)}L_{k}^{(0)} = L_{k}^{(0)}L_{k}^{(n)}$.
	\end{corollary}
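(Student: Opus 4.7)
The plan is to obtain this commutativity as an immediate specialization of the product theorem that directly precedes the corollary. Since that theorem asserts $L_k^{(m)} L_k^{(n)} = L_k^{(m+n)} L_k^{(0)} = L_k^{(n)} L_k^{(m)}$ for all $m,n \in \Z$, the corollary follows simply by taking $m=0$: both sides then reduce to $L_k^{(0)} L_k^{(n)}$ and $L_k^{(n)} L_k^{(0)}$, which the theorem already equates (each equal to $L_k^{(n)} L_k^{(0)}$).

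As an alternative, equally short route, I would invoke Theorem \ref{Fibo=Lucas} twice: writing $L_k^{(n)} = Q_k^n L_k^{(0)} = L_k^{(0)} Q_k^n$, one computes
\begin{eqnarray*}
L_k^{(n)} L_k^{(0)} \;=\; Q_k^n L_k^{(0)} L_k^{(0)} \;=\; L_k^{(0)} Q_k^n L_k^{(0)} \;=\; L_k^{(0)} L_k^{(n)},
\end{eqnarray*}
where the middle equality uses the second identity of Theorem \ref{Fibo=Lucas}.

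There is no real obstacle here; the content of the corollary is already contained in the preceding theorem, so the only choice is presentational. I would go with the one-line specialization $m=0$ in the proved theorem, since it makes the logical dependence crystal clear and avoids re-deriving anything.
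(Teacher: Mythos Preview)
Your proposal is correct and matches the paper's approach: the paper states this as a corollary immediately after the product theorem with no separate proof, so the intended argument is precisely the specialization $m=0$ you describe. Your alternative via Theorem \ref{Fibo=Lucas} is also valid but unnecessary here.
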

	\begin{theorem}\label{inv=rel}
		Let $L_{k}^{(n)}$ be the ~$k^{th}$ order generalized Lucas matrix and $Q_{k}^{n}$ be GFM(k,n), then
		\begin{eqnarray}
			L_{k}^{(n)}L_{k}^{(-n)} = H \hspace{.4cm}~~\forall ~~n\in\Z,~~where ~H=(L_{k}^{(0)})^{2}
		\end{eqnarray}
	\end{theorem}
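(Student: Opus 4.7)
My plan is to derive the identity by exploiting the factorization $L_k^{(n)} = Q_k^{n} L_k^{(0)} = L_k^{(0)} Q_k^{n}$ established in Theorem~\ref{Fibo=Lucas}, combined with the group-like behavior of the GFM powers from Theorem~\ref{MultinacciProp}. The observation is that once one expresses $L_k^{(n)}$ and $L_k^{(-n)}$ in suitable factored form, the $Q_k$ factors annihilate each other and only a copy of $(L_k^{(0)})^{2}$ survives.

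Concretely, I would first apply Theorem~\ref{Fibo=Lucas} in its right-multiplication form to rewrite $L_k^{(n)} = L_k^{(0)} Q_k^{n}$, and simultaneously use the left-multiplication form $L_k^{(-n)} = Q_k^{-n} L_k^{(0)}$. Substituting into the left-hand side gives
\begin{equation*}
L_k^{(n)} L_k^{(-n)} = \bigl(L_k^{(0)} Q_k^{n}\bigr)\bigl(Q_k^{-n} L_k^{(0)}\bigr) = L_k^{(0)}\bigl(Q_k^{n} Q_k^{-n}\bigr) L_k^{(0)}.
\end{equation*}
By Theorem~\ref{MultinacciProp}(3) the middle bracket collapses to $I_k$, yielding $L_k^{(n)} L_k^{(-n)} = (L_k^{(0)})^{2} = H$, as required.

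Alternatively, and even more compactly, one can invoke the just-proved theorem $L_k^{(m)} L_k^{(n)} = L_k^{(m+n)} L_k^{(0)}$ with the choice $m \mapsto n$, $n \mapsto -n$: the exponents telescope to $0$ and the formula gives $L_k^{(n)} L_k^{(-n)} = L_k^{(0)} L_k^{(0)} = H$ in a single step. I would present the hands-on derivation first (so that the proof remains self-contained in terms of the basic factorization) and then note the one-line corollary-style version.

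There is no serious obstacle here: the identity is almost tautological once the commuting-factorization of $L_k^{(n)}$ is in hand. The only point requiring a brief check is that the negative-index case of Theorem~\ref{Fibo=Lucas} is genuinely available, i.e.\ that $L_k^{(-n)}$ factors as $Q_k^{-n} L_k^{(0)}$; this was already handled in the induction of that theorem, so no new work is needed. The argument works uniformly for all $n \in \mathbb{Z}$ without splitting into positive and negative cases.
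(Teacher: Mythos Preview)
Your proof is correct and follows essentially the same approach as the paper: both arguments reduce $L_k^{(n)}L_k^{(-n)}$ via the factorization of Theorem~\ref{Fibo=Lucas} and then cancel $Q_k^{n}Q_k^{-n}=I_k$ using Theorem~\ref{MultinacciProp}. The only cosmetic difference is that the paper writes both factors in the form $Q_k^{\pm n}L_k^{(0)}$ and invokes the commutation $L_k^{(0)}Q_k^{-n}=Q_k^{-n}L_k^{(0)}$, whereas you pick the two factored forms so that the $Q$-factors are already adjacent.
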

	\begin{proof}
		For $ n\in \N $, we  have
		\begin{eqnarray}
			L_{k}^{(n)}L_{k}^{(-n)} &=& Q_{k}^{n}L_{k}^{(0)}Q_{k}^{-n}L_{k}^{(0)}
			= Q_{k}^{n}Q_{k}^{-n}L_{k}^{(0)}L_{k}^{(0)}= I_kH=H\nonumber
			% \hspace{1cm}\text{using lemma(\ref{lemmacommute})}\nonumber
		\end{eqnarray}
	\end{proof}
	\begin{theorem}\label{thminvofLucas}
		Suppose $L_{k}^{(n)}$ be a GLM(k,n) and $H=(L_{k}^{(0)})^{2}$ is invertible, then inverse is given by
		$$Inv(L_{k}^{(n)}) = L_{k}^{(-n)}H^{-1}$$
	\end{theorem}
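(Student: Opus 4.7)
The plan is to extract the inverse directly from Theorem \ref{inv=rel}, which already establishes the identity $L_k^{(n)} L_k^{(-n)} = H$. First I would right-multiply both sides by $H^{-1}$ (which exists by hypothesis) to obtain
\[
L_k^{(n)} \bigl(L_k^{(-n)} H^{-1}\bigr) = H H^{-1} = I_k,
\]
so the matrix $L_k^{(-n)} H^{-1}$ is an immediate candidate right inverse of $L_k^{(n)}$.

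The remaining task is to verify that the same product also serves as a left inverse. Here I would first note that applying Theorem \ref{inv=rel} with $-n$ in place of $n$ gives $L_k^{(-n)} L_k^{(n)} = H$ as well. Then I would invoke the commutativity corollary $L_k^{(n)} L_k^{(0)} = L_k^{(0)} L_k^{(n)}$ (or, equivalently, the factorization $L_k^{(n)} = Q_k^{n} L_k^{(0)} = L_k^{(0)} Q_k^{n}$ from Theorem \ref{Fibo=Lucas}); iterating it shows that $H = (L_k^{(0)})^{2}$ commutes with $L_k^{(n)}$, and consequently $H^{-1}$ commutes with $L_k^{(n)}$ too. With this in hand, the one-line check
\[
\bigl(L_k^{(-n)} H^{-1}\bigr) L_k^{(n)} = L_k^{(-n)} L_k^{(n)} H^{-1} = H H^{-1} = I_k
\]
closes the argument.

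I do not foresee any genuine obstacle: the statement is essentially an immediate corollary of Theorem \ref{inv=rel} together with the already-established commutativity between $L_k^{(0)}$ and $L_k^{(n)}$. The only point requiring a small amount of care is to argue the two-sided inverse rather than stopping after the right-inverse computation, since matrix multiplication is noncommutative in general; the commutativity of $H^{-1}$ with $L_k^{(n)}$ is precisely what eliminates this concern.
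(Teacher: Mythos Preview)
Your approach is essentially the same as the paper's: both start from Theorem~\ref{inv=rel} and post-multiply by $H^{-1}$ to obtain $L_k^{(n)}(L_k^{(-n)}H^{-1})=I_k$. The paper actually stops there, implicitly using the standard fact that a one-sided inverse of a square matrix is automatically two-sided, whereas you go on to verify the left-inverse explicitly via the commutativity of $H$ with $L_k^{(n)}$; this extra care is harmless and arguably makes the argument cleaner.
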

	\begin{proof}
		Since, we have $L_{k}^{(n)}L_{k}^{(-n)} = H$ and $(L_{k}^{(0)})$ is invertible matrix hence does H. Now, on post multiplication with $H^{-1}$, we get
 		$$L_{k}^{(n)}(L_{k}^{(-n)}H^{-1}) =I_{k}; ~~~~~I_{k}~\text{is identity matrix.}$$
		Thus, for every GLM(k) $L_{k}^{(n)}$ there is a unique matrix $(L_{k}^{(-n)}H^{-1})$ such that their product is an identity matrix. Hence
		$$Inv(L_{k}^{(n)}) = L_{k}^{(-n)}H^{-1}$$
	\end{proof}
	\begin{lemma}
		Let p be a prime number and L is a generalized Lucas matrix, then we have 
		\begin{eqnarray}
		det(L)\pmod p = det(L\pmod p)
		\end{eqnarray}
	\end{lemma}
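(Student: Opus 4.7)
The plan is to reduce this to a general fact about determinants of integer matrices, since the Lucas structure plays no role here. Specifically, I would argue that the determinant, viewed as a function of the matrix entries, is an integer polynomial, and reduction modulo $p$ is a ring homomorphism $\varphi\colon\Z\to\Z/p\Z$. Any polynomial identity over $\Z$ descends to $\Z/p\Z$ under $\varphi$, so the two orders of operation (take determinant, then reduce versus reduce entries, then take determinant) must agree.

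Concretely, I would first invoke the Leibniz formula
\begin{eqnarray}
\det(L) \;=\; \sum_{\sigma\in S_k}\operatorname{sgn}(\sigma)\prod_{i=1}^{k} L_{i,\sigma(i)},\nonumber
\end{eqnarray}
which expresses $\det(L)$ as an integer-coefficient polynomial in the entries $L_{ij}\in\Z$. Next, using that $\varphi$ preserves both addition and multiplication, I would push $\varphi$ through the sum and product:
\begin{eqnarray}
\det(L)\pmod p \;=\; \sum_{\sigma\in S_k}\operatorname{sgn}(\sigma)\prod_{i=1}^{k}\bigl(L_{i,\sigma(i)}\pmod p\bigr),\nonumber
\end{eqnarray}
and recognize the right-hand side as exactly the Leibniz expansion of $\det(L\pmod p)$, establishing the claim.

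The main obstacle is essentially nonexistent, since this is a routine consequence of the ring-homomorphism property of reduction mod $p$. I note that primality of $p$ is not actually needed for the identity itself (it holds modulo any integer $m$), although primality will be required later when one wants $\Z/p\Z$ to be a field so that invertibility of $L\pmod p$ is detected by nonvanishing of $\det(L)\pmod p$. Hence the only real content of the lemma is to record this compatibility for use in the cryptographic scheme developed in later sections.
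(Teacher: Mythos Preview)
Your argument is correct: the Leibniz expansion together with the fact that reduction modulo $p$ is a ring homomorphism $\Z\to\Z/p\Z$ immediately gives the claimed identity, and your remark that primality is irrelevant for the identity itself (only needed later for invertibility) is also accurate.

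As for comparison, the paper states this lemma without proof, so there is nothing to compare against. Your write-up therefore supplies the missing justification rather than offering an alternative route.
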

	\begin{theorem}\cite{stallings2017cryptography}\label{entryModulo}
		Let $ A=(a_{ij})$ be any matrix, then we have
		$$A \pmod{p} = [a_{ij}\pmod{p}]$$
		%~~~~~~\text{for ~$ 1\leq i\leq m $~\&~$ 1\leq j\leq n $}.
	\end{theorem}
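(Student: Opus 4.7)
The plan is essentially to recognize the theorem as a definitional/notational statement rather than as an assertion requiring nontrivial calculation. The identity $A \pmod{p} = [a_{ij} \pmod{p}]$ records the convention that modular reduction of a matrix is carried out coordinatewise, which is the working convention adopted implicitly throughout the cryptographic constructions of the paper (cf.\ equation~(\ref{Affine-Hill}) and Theorem~\ref{thminvofLucas}).

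First, I would fix the interpretation by defining $A \pmod{p}$ as the image of $A$ under the map $\pi_{*}\colon M(\mathbb{Z}) \to M(\mathbb{Z}/p\mathbb{Z})$ induced coordinatewise by the canonical ring homomorphism $\pi\colon \mathbb{Z} \to \mathbb{Z}/p\mathbb{Z}$. The fact that this coordinatewise extension is itself a ring homomorphism follows because matrix addition and multiplication are polynomial expressions in the entries with integer coefficients, and any ring homomorphism commutes with such polynomial expressions. This justifies $A \pmod{p} := [\pi(a_{ij})]$ as the working definition and ensures compatibility with the earlier modular arithmetic in the Affine--Hill cipher and in the inverse formula of Theorem~\ref{thminvofLucas}.

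With this definition in hand, the equality of the theorem is immediate: the $(i,j)$-entry of $\pi_{*}(A)$ is by construction $\pi(a_{ij}) = a_{ij} \pmod{p}$, which is exactly the right-hand side $[a_{ij} \pmod{p}]$. No induction or case analysis on the matrix shape is required, and no special property of prime $p$ is used at this stage — primality will only matter later, when one wishes the reduced matrix ring to be over a field.

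The only genuine obstacle is conceptual rather than computational: one must agree on what the symbol $A \pmod{p}$ is supposed to denote. Once the entrywise convention is adopted — and it is the only convention consistent with the modular matrix products, sums, and inverses already appearing in Section~2 and in the Lucas matrix framework of Section~3 — the theorem reduces to an unfolding of notation. This explains why the authors credit \cite{stallings2017cryptography} for the statement rather than supply an independent argument.
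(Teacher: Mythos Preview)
Your proposal is correct. The paper provides no proof of this statement at all: it is simply quoted from \cite{stallings2017cryptography} and used as a standing convention, exactly as you diagnosed. Your write-up is therefore more detailed than the paper itself, but fully aligned with its treatment of the result as a definitional fact about entrywise reduction rather than a theorem requiring argument.
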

%%%%%%%%%%%%%%%%%%%%%%%%%%%%%%%%%%%%%%%%%%%%%%%%%%%%%%%%%%%%%%%%%%%%%%%%%%%%%%%%%%%
%%%%%%%%%%%%%%%%%%%%%%%%%%%%%%%%%%%%%%%%%%%%%%%%%%%%%%%%%%%%%%%%%%%%%%%%%%%%%%%%%

%%%%%%%%%%%%%%%%%%%%%%%%%%%%%%%%%%%%%%%%%%%%%%%%%%%%%%%%%%%%%%%%%%%%%%%%%%%%%%%%%%%

%%%%%%%%%%%%%%%%%%%%%%%%%%%%%%%%%%%%%%%%%%%%%%%%%%%%%%%%%%%%%%%%%%%%%%%%%%%%%%%%%%%	
\section{Encryption Scheme \& Algorithm}

%%%%%%%%%%%%%%%%%%%%%%%%%%%%%%%%%%%%%%%%%%%%%%%%%%%%%%%%%%%%%%%%%%%%%%%%%%%%%%%%%
%	\subsection{Key Agreement}
	Let us assume that receiver's public key is $pk(p, E_{1},E_{2})$ whose components are constructed by receiver(Bob)  as discussed in subsection(\ref{PublikKeySetup}). Now, a sender(Alice) constructs a secret key (say $\lambda$) with this public key by choosing a restricted integer and form an encryption matrix with their signature. Further, after receiving the encrypted message with the signature from Alice, Bob retrieve the secret key($\lambda$) and after some calculation recover the plain text (see section  \ref{ElGamal}). The following algorithm summarizes the methodology.
%%%%%%%%%%%%%%%%%%%%%%%%%%%%%%%%%%%%%%%%%%%%%%%%%%%%%%%%%%%%%%%%%%%%%%%%%%%%%%%%%
\subsection{Algorithm}\label{algorithm}
\subsubsection*{Encryption Algorithm(sender have access to $pk(p, E_{1},E_{2})$):}
	\begin{enumerate}
		\item Sender(Alice) first chooses a secret number $e$, such that $1 < e < \phi(p)$.
		\item \textbf{Signature:} $s \leftarrow E_{1}^{e}\pmod{p}$.
		\item \textbf{Secret key:} $\lambda \leftarrow E_{2}^{e}\pmod{p}$.
		\item Initiate Lucas sequence$\{l_{\lambda,s}\}$ of order $\lambda$.
		\item \textbf{Key Matrix:} $K \leftarrow L_{\lambda}^{(s)} \pmod{p}$, where $L_{\lambda}^{(s)}$ may be obtained from step-4 and equ.{(\ref{GenLucasMatrix})}.
		\item \textbf{Shift vector:} $B \leftarrow [l_{\lambda,\lambda}, l_{\lambda,\lambda+1}, ...,l_{\lambda,2\lambda-1}]$ of size $1\times\lambda$ (using step-4).
		\item  \textbf{Encryption:} $C=Enc(P):~~~~c_{i}\leftarrow (p_{i}K + B) \pmod{p}$.
		\item  transmit $(C,s)$ to Bob.
	\end{enumerate}
    \textbf{Decryption Algorithm:} After receiving $( C,s)$ from sender.
	\begin{enumerate}
		\item \textbf{Secret key:} $\lambda\leftarrow s^{D}\pmod{p}$, where $D$ is Bob's chosen secret key. 
		\item Initiate Lucas sequence$\{l_{\lambda,s}\}$.
		\item \textbf{Key Matrix:} $K^* \leftarrow L_{\lambda}^{(-s)}H^{-1}\pmod{p}$, where both $L_{\lambda}^{(-s)}$ \& $H=(L_{\lambda}^{(0)})^2$ may be obtained from step-2 using eqn.({\ref{GenLucasMatrix}}) \& eqn.(\ref{GenInitalLucasMatrix} respectively.
		\item \textbf{Shift vector:} $B \leftarrow [l_{\lambda,\lambda}, l_{\lambda,\lambda+1}, ...,l_{\lambda,2\lambda-1}]$ of size $1\times\lambda$ (using step-2).
		\item \textbf{Decryption:} $P=Dec(C):~~p_{i} \leftarrow(c_{i} - B)K^{*}\pmod{p}$
		\item Plaintext(P) recovered.
	\end{enumerate}
%%%%%%%%%%%%%%%%%%%%%%%%%%%%%%%%%%%%%%%%%%%%%%%%%%%%%%%%%%%%%%%%%%%%%%%%%%%%%%%%%
	\subsection{Example}
	 \begin{example}\label{eg1}
	 	Let us assume that Alice(sender) wish to send message to Bob(receiver). Consider the prime number $ p=37$. Establish the public key and secret key of  communication  for Bob. 
	 \end{example}
 	\begin{proof}[Solution]
 		Bob first choose an integer D is such that $1< D< \phi(37)=36$, say $ D=10 $. The set of primitive roots of 37 is given by X= \{2, 5, 13, 15, 17, 18, 19, 20, 22, 24, 32, 35 \}. Now Bob selects a primitive root $\alpha=17$ of $ p $ from X.
 		Now, according to public key setup(\ref{PublikKeySetup}), Bob assigns $E_1=17, E_2=E_1^D\pmod{p}\equiv 17^{10}\pmod{37} \equiv 28$.
 		Thus public key $ pk(p, E_{1},E_{2})$ for Bob is $ pk(37,17,28) $ and secret key is $ sk(10) $. Now using $ pk(37,17,28) $ anyone can send message to Bob (explained in next example).
 	\end{proof}
% \textbf{Note:} Primitive roots of 37 are 2, 5, 13, 15, 17, 18, 19, 20, 22, 24, 32, \& 35.
%	\\Here, we use p=37, i.e the 37-alphabet letter in which A-Z have numerical equivalent 00-25, 0-9 have numerical equivalent 26-35 and 36 for blank/space.
%%%%%%%%%%%%%%%%%%%%%%%%%%%%%%%%%%%%%%%%%%%%%%%%%%%%%%%%%%%%%%%%%%%%%%%%%%%%%%%%
	\begin{example}[Encryption-Decryption]
	Using $ pk(37,17,28)$, construct the key matrix \& shift vector and encrypt the plaintext \textbf{NOBLE2022}.
	\end{example}
	\begin{proof}[Solution]
	Here the numerical values equivalent to \textbf{NOBLE2022} is $[13,14,01,11,04,28,26,28,28]$. Let us consider the alphabets $\Sigma =\Z_{37}$ defined as for letters from A-Z equivalent to 00-25, digits 0-9 are that to 26-35 and 36 for blank/white space.	
	Now, according the algorithm(\ref{algorithm}),\\
	Alice first choose an integer $ e $ such that $1 < e < \phi(37)$, say $ e=23 $.\\
	Now, Alice make signature(s) as $s=E_{1}^{e} = 17^{23}\pmod{37}\equiv 18$.
	\\Thus secret key for Alice is $\lambda = E_{2}^{e} = 28^{23}\pmod{37} \equiv 3$ and the key matrix $K=L_{\lambda}^{(s)}\pmod{p}$ for encryption is given by
	\begin{eqnarray}
		K = L_{3}^{(18)} &=& 
		\begin{bmatrix}\label{Q2}
		l_{3,20} & l_{3,19}+l_{3,18} & l_{3,19} \\			
		l_{3,19} & l_{3,18}+l_{3,17} & l_{3,18} \\		
		l_{3,18} & l_{3,17}+l_{3,16} & l_{3,17} \\			
	\end{bmatrix}\pmod{37}\nonumber\\
	 &=& \begin{bmatrix}
		196331 & 164778 & 106743 \\	
		106743 & 89588 & 58035 \\	
		58035 & 48708 & 31553 \\	
	\end{bmatrix}\pmod{37}~
	 = 
	\begin{bmatrix}
		9 & 17 & 35 \\	
		35 & 11 & 19 \\	
		19 & 16 & 29 \\
	\end{bmatrix}
	\end{eqnarray}
	which is obtained by substituting the values of corresponding terms of 
	generalized Lucas sequence for $\lambda = 3$  as given in the table
	\begin{table}[h]\label{Lucas3table}
		{\resizebox{\linewidth}{!}
	{\begin{tabular}{|c|c|c|c|c|c|c|c|c|c|c|c|c|c|c|c|c|c|c|c|c|c|c|c|c|c|c|}
			\hline 
		index (n)&... & -1 & 0 & 1 & 2 & 3 & 4 & 5 & 6 & ...&15 &16&17&18&19&20&... \\ 
			\hline 
		Lucas Seq.($ l_{3,n}$) & ...&  -1 & \textbf{3} & \textbf{1} & \textbf{3} & 7 & 11 & 21 & 39 &  ...&9327 &17155 &31553 & 58035 & 106743 &196331&... \\ 
			\hline
	\end{tabular}}}
	\caption{Lucas sequence of order 3}
	\end{table}
	\\And shift vector(B) is, $ B=[l_{3,3},l_{3,4},l_{3,5}]=[07,11,21] $.
	\\Now divide the plaintext \textbf{NOBLE2022} in blocks of size $1\times \lambda$  as follow:
	\\$P_{1}=[~N~O ~B]=[13~~14~~01], P_{2}=[L ~E ~2]=[11~04~28] ~~\text{and}~~P_{3}=[0~2~2]= [26~28~28]$.
	\\\textbf{Encryption takes places as:}
	$C_i \leftarrow  (P_iK + B)\pmod{37}$.
	\\ $C_{1} = (P_{1}K+B)\equiv$ 
	$\left(\begin{bmatrix}
		13 & 14 & 01 \\
	\end{bmatrix}	
	\\\begin{bmatrix}
		9 & 17 & 35 \\	
		35 & 11 & 19 \\	
		19 & 16 & 29 \\	
	\end{bmatrix}+
	\begin{bmatrix}
		07 & 11 & 21 \\
	\end{bmatrix}\right)\pmod{37}$  	\\\indent$\equiv (04~~32~~31) \sim$ (E 7 6)
	\\ $C_{2} = (P_{2}K+B)\equiv$ 
	$\left(\begin{bmatrix}
		11 & 04 & 28 \\
	\end{bmatrix}	
	\\\begin{bmatrix}
		9 & 17 & 35 \\	
		35 & 11 & 19 \\	
		19 & 16 & 29 \\
	\end{bmatrix}+
	\begin{bmatrix}
		07 & 11 & 21 \\
	\end{bmatrix}\right)\pmod{37}$		\\\indent$\equiv (01~~24~~36) \sim$ (B Y $ \square $)
	\\ $C_{3} = (P_{3}K+B)\equiv$ 
	$\left(\begin{bmatrix}
		26 & 28 & 28 \\
	\end{bmatrix}	
	\\\begin{bmatrix}
		9 & 17 & 35 \\	
		35 & 11 & 19 \\	
		19 & 16 & 29 \\
	\end{bmatrix}+
	\begin{bmatrix}
		07 & 11 & 21 \\
	\end{bmatrix}\right)\pmod{37}$		\\\indent$\equiv (14~~25~~18) \sim$ (O Z S)
	\\Thus, Alice encrypted the plaintext \textbf{NOBLE2022} to \textbf{E76BY$\square$OZS},
%	\\i.e $P : HELLO2019 \rightarrow C : HP393IVY1$
	and send it to Bob along with her signature i.e Alice sends $\{s=18, C=C_1C_2C_3\}$ to Bob.
%%%%%%%%%%%%%%%%%%%%%%%%%%%%%%%%%%%%%%%%%%%%%%%%%%
	\\\textbf{Decryption:} On the other side 'Bob' receives $(C,s)$ from Alice. To construct the decryption key $K^{*}$ Bob first recovers $\lambda$. It can be obtained using their secret key $D$ as follow:
	\begin{eqnarray}
		\lambda = s^{D}\pmod{37}= 18^{10}\pmod{37} \equiv 3. \nonumber
	\end{eqnarray}
	Thus $K^* = L_{3}^{(-s)}H^{-1}\pmod{p}$, where $H=(L_{3}^{(0)})^2 = \begin{bmatrix}
		16 & 18 & 14 \\		
		14 & 2 & 4 \\		
		4 & 10 & -2 \\		
	\end{bmatrix}$ is given as
	\\\begin{eqnarray}
		K^{*} &=& L_{3}^{(-18)}H^{-1}\pmod{37} \nonumber\\
		&=& 
		\begin{bmatrix}
			l_{3,-16} & l_{3,-17}+l_{3,-18} & l_{3,-17} \\			
			l_{3,-17} & l_{3,-18}+l_{3,-19} & l_{3,-18} \\		
			l_{3,-18} & l_{3,-19}+l_{3,-20} & l_{3,-19} \\		
		\end{bmatrix} \dfrac{1}{44}
		\begin{bmatrix}
			-1 & 4 & 1 \\		
			1 & -2 & 3 \\		
			3 & -2 & -5 \\		
		\end{bmatrix} \pmod{37}\nonumber\\
		&=&
		\dfrac{1}{44}
		\begin{bmatrix}
			-253 & 318 & 271 \\		
			271 & -524 & 47 \\		
			47 & -224 & -571 \\		
		\end{bmatrix}
		\begin{bmatrix}
			-1 & 4 & 1 \\		
			1 & -2 & 3 \\		
			3 & -2 & -5 \\		
		\end{bmatrix} \pmod{37}=
		\begin{bmatrix}
			18 & 36 & 7 \\		
			7 & 11 & 29 \\		
			29 & 15 & 19 \\		
		\end{bmatrix}\nonumber
	\end{eqnarray}
	Clearly, $ K.K^{*} = I\pmod{37}$ (by the theorem (\ref{thminvofLucas})).
	\\And shift vector is recovered by $\lambda$ as $ B=[l_{3,3},l_{3,4},l_{3,5}]=[07,11,21]  $.
	Note that, entries of both $L_{3}^{-18}$ and H may be obtained from the table(\ref{Lucas3table})
	\\Thus plaintext may be obtained by $P_{i}\leftarrow (C_{i} - B).K^{*}\pmod{37}$.
	\\$P_{1} = (C_{1}-B)K^{*} \equiv$ 
	$\left(\begin{bmatrix}
		04 & 32 & 31 \\ 
	\end{bmatrix}	
	-
	\begin{bmatrix}
		07 & 11 & 21 \\
	\end{bmatrix}\right)
	\begin{bmatrix}
		18 & 36 & 7 \\		
		7 & 11 & 29 \\		
		29 & 15 & 19 \\		
	\end{bmatrix}\pmod{37} \\ \indent \equiv (13~~14~~01) \sim$ (N O B)
	\\
	$P_{2} = (C_{2}-B)K^{*} \equiv$ 
	$\left(\begin{bmatrix}
		01 & 24 & 36 \\
	\end{bmatrix}	
	-
	\begin{bmatrix}
		07 & 11 & 21 \\
	\end{bmatrix}\right)
	\begin{bmatrix}
		18 & 36 & 7 \\		
		7 & 11 & 29 \\		
		29 & 15 & 19 \\	
	\end{bmatrix}\pmod{37} \\ \indent \equiv (11~~04~~28) \sim$ (L E 2)
	\\$P_{3} = (C_{3}-B)K^{*} \equiv$ 
	$\left(\begin{bmatrix}
		14 & 25 & 18 \\
	\end{bmatrix}	
	-
	\begin{bmatrix}
		07 & 11 & 21 \\
	\end{bmatrix}\right)
	\begin{bmatrix}
		18 & 36 & 7 \\		
		7 & 11 & 29 \\		
		29 & 15 & 19 \\		
	\end{bmatrix}\pmod{37} \\ \indent \equiv (26~~28~~28) \sim$ (0 2 2)
	\\Thus, the plaintext \textbf{NOBLE2022} successfully received by Bob.
	\end{proof}
%%%%%%%%%%%%%%%%%%%%%%%%%%%%%%%%%%%%%%%%%%%%%%%%%%%%%%%%%%%%%%%%%%%%%%%%%%%%%%%%%%%%%
%%%%%%%%%%%%%%%%%%%%%%%%%%%%%%%%%%%%%%%%%%%%%%%%%%%%%%%%%%%%%%%%%%%%%%%%%%%%%%%%%%%%%
\section{Strength-Analysis}
	In our proposed scheme, generalized Lucas matrix and Elgamal technique have been considered as a key element of system and decryption matrix is set up as $L_{\lambda}^{(-s)}H^{-1}$ constructed with combinations of terms of $ GLS(\lambda,s) $. The construction of key matrices is quite easy for authorized parties as $\lambda$ are known for both of them but it is very difficult for an adversary to obtain $\lambda$, as the adversary needs to solve a discrete logarithm problem\cite{elgamal1985public}. Further matrix construction is based on only two elements $(\lambda,s)$, so it reduces time complexity as well as space complexity of key formation and calculation of its inverse.
	In the context of attacks based on public data, one of the popular attacks is brute force attack \cite{stallings2017cryptography,stinson2005cryptography,paar2009understanding} which have been discussed here. In case of brute force attack, the adversary needs to calculate $\lambda$ which is almost impossible (discrete logarithm problem), and the next challenge for the adversary is to identify the correct key matrix out of $ |GL(\lambda)| $ matrices, where $ GL(\lambda)$ represents general linear group\cite{dummit2004abstract} of order $\lambda$. Since $
	 |GL(\lambda)| $ given by 
	\begin{eqnarray}\label{Gln}
		|GL_{\lambda}(F_{p})| = (p^{\lambda}-p^{\lambda-1})(p^{\lambda}-p^{\lambda-2}) \cdots (p^{\lambda}-p^{1})(p^{\lambda}-1)
	\end{eqnarray}
	From equation(\ref{Gln}), it is clear that security for generalized Lucas matrices $L_{\lambda}^{(s)}$ depends on $\lambda$ only, not on signature $s$. So $s$ does not compromise the security even though it is known to the adversary.
	For example, consider $ p=37 $ and $\lambda = 50$, then by equation(\ref{Gln}) total number of possible key space over $  \mathbb{F}_{37} $ is approximately $3.105\times 10^{3920}$ which is too large. And in case of $\lambda$ and/or prime p increasing, then the key space grows exponentially.
	%%%%%%%%%%%%%%%%%%%%%%%%%%%%%%%%%%%%%%%%%%%%%%%%%%%%%%%%%%%%%%%%%%%%%%
	\section{Conclusion}
	We know that key component in a cryptosystem plays an important role to keep cryptography secure and strong. Many researchers are working in the direction of developments of new cryptography and modification in existing cryptography based on properties of number theory and linear algebra.	
	
	Here, in this article, we have first proposed a generalized Lucas sequence which is constructed with the trace of $ GFM(\lambda,s) $ and its initial values are also obtained. Further, we have developed a recursive matrix whose entries are constructed from linear combinations of generalized Lucas sequences and investigated some properties like direct calculation of its inverse, product of two matrices, ...etc irrespective of matrix algebra.
	The purpose behind considering the field for a key element is the necessity of the existence of inverse. In our case, we have considered generalized Lucas matrices which  does not form a multiplicative group but it assure the existence of inverse matrix for each $GLM (\lambda,s)$ for every $\lambda(\geq 2)\in\N$, i.e. we have shown that for every integer $s$, we have matrix $K^*=L_{\lambda}^{(-s)}H^{-1}$ such that $L_{\lambda}^sK^*=K^*L_{\lambda}^s=I_{\lambda}$.

	Also, we have proposed a modified public key cryptography using Affine-Hill cipher \& Elgamal signature schemes with generalized Lucas matrices and show the implementation of Lucas matrices as a key matrix.
	
	Thus generalized Lucas matrices as key components in cryptosystem play a crucial role. It enlarges the keyspace, reduces the time complexity as well as the space complexity of key formation in cryptography. Our proposed method is based on construction with two parameters and has three digital signatures ($\lambda, s ~\& ~\text{shift vector(B)}$) which strengthen the security of the modified cryptography.
	Since $ \lambda $ is known only to both end parties(Alice and Bob) so shift vector B constructed with $ \lambda $ is also known only to Alice and Bob, thus it is practically impossible to recover $\lambda$ by anyone else as it is based on discrete logarithm problem. Hence, our proposed method is mathematically simple for authorized party and tedious for an intruder, mathematically strong and have a large keyspace.

%%%%%%%%%%%%%%%%%%%%%%%%%%%%%%%%%%%%%%%%%%%%%%%%%%%%%%%%%%%%%%%%%%%%%%%%%%%%%%%%%%%
%	\subsection*{Declaration of competing interest}
%	The authors declare that they have no known competing financial interests or personal relationships that could have appeared to influence the work reported in this paper.
	\subsection*{Acknowledgment}
	The first and third authors acknowledge the University Grant Commission, India for providing fellowship for this research work.
		
\bibliography{LucasCryptography}
\bibliographystyle{acm}
\end{document}